\newcommand{\NP}{\ensuremath{\sf{NP}}\xspace}
\newcommand{\NPH}{\textrm{\textup{NP-hard}}\xspace}
\newcommand{\FPT}{\ensuremath{\sf{FPT}}\xspace}
\newcommand{\longversion}[1]{}
 \providecommand\@dotsep{5}
 \def\listtodoname{}
 \def\listoftodos{\@starttoc{tdo}\listtodoname}
\newcounter{nmcomment}
\newcounter{vkcomment}
\newcommand{\NPC}{\textrm{\textup{NP-complete}}\xspace}
\newcommand{\YES}{\textsc{Yes}}
\newcommand{\NO}{\textsc{No}}
\newcommand{\defparproblem}[4]{
  \vspace{3mm}
\noindent\fbox{
  \begin{minipage}{.95\textwidth}
  \begin{tabular*}{\textwidth}{@{\extracolsep{\fill}}lr} \textsc{#1}  & {\bf{Parameter:}} #3 \\ \end{tabular*}
  {\bf{Input:}} #2  \\
  {\bf{Question:}} #4
  \end{minipage}
  }
  \vspace{2mm}
}
\newcommand{\OO}{{\mathcal O}}
\newcommand{\mhe}{\textsc{Max Happy Edges}}
\newcommand{\mhv}{\textsc{Max Happy Vertices}}
\begin{document}
%
\title{The Parameterized Complexity of Happy Colorings}
\titlerunning{The Parameterized Complexity of Happy Colorings}  
%
\author{Neeldhara Misra\inst{1} \and I. Vinod Reddy\inst{1}}
\authorrunning{N. Misra and IV. Reddy} 
%

%
\institute{Indian Institute of Technology, Gandhinagar, Palaj 382355\\
\email{neeldhara.m|reddy\_vinod@iitgn.ac.in}}

\maketitle              

\begin{abstract}
Consider a graph $G = (V,E)$ and a coloring $c$ of vertices with colors from $[\ell]$. A vertex $v$ is said to be happy with respect to $c$ if $c(v) = c(u)$ for all neighbors $u$ of $v$. Further, an edge $(u,v)$ is happy if $c(u) = c(v)$. Given a partial coloring $c$ of $V$, the Maximum Happy Vertex (Edge) problem asks for a total coloring of $V$ extending $c$ to all vertices of $V$ that maximises the number of happy vertices (edges). Both problems are known to be NP-hard in general even when $\ell = 3$, and is polynomially solvable when $\ell = 2$. In [IWOCA 2016] it was shown that both problems are polynomially solvable on trees, and for arbitrary $k$, it was shown that MHE is \NPH{} on planar graphs and is \FPT{} parameterized by the number of precolored vertices and branchwidth. 

We continue the study of this problem from a parameterized prespective. Our focus is on both structural and standard parameterizations. To begin with, we establish that the problems are \FPT{} when parameterized by the treewidth and the number of colors used in the precoloring, which is a potential improvement over the total number of precolored vertices. Further, we show that both the vertex and edge variants of the problem is \FPT{} when parameterized by vertex cover and distance-to-clique parameters. We also show that the problem of maximizing the number of happy edges is \FPT{} when parameterized by the standard parameter, the number of happy edges. 
We show that the maximum happy vertex (edge) problem is \NPH{} on split graphs and bipartite graphs and polynomially solvable on cographs. 
 
\end{abstract}

\section{Introduction} 

Given an undirected vertex colored graph $G$, we say that a vertex $v$ in $G$ is \textit{happy} if $v$ and all its neighbors have same color. Along similar lines, an edge is happy if both its endpoints have same color. Given a  partially colored graph $G$ with $\ell$ colors, the \mhv{} ($\ell$-\textsc{MHV}) problem is to color the remaining vertices of graph such that number of happy vertices is maximized. The \mhe{} ($\ell$-\textsc{MHE}) problem is to color the remaining vertices of graph such that number of happy edges is maximized. 

The $\ell$-\textsc{MHE} problem generalizes the \textsc{Multiway Uncut} problem which is defined as follows.  Given a graph $G$ and a terminal set $S=\{s_1,\cdots,s_k\} \subseteq V(G)$, the goal is to find a partition $\{V_1, \cdots, V_k\}$ of $V(G)$ such that the number of edges with both ends points present in same $V_i$ is maximized. The Multiway Uncut problem is a special case of $\ell$-MHE problem, where each terminal has a unique precolor. 

Both $\ell$-MHV and $\ell$-MHE are \NPH{} \cite{zhang2015algorithmic} on general graphs for $\ell \geq 3$ and both 
$2$-MHV and $2$-MHE can be solved in polynomial time. Aravind et al. \cite{aravind2016linear} showed that both 
the problems admit linear time algorithms on trees. 
Zhang et al. \cite{zhang2015algorithmic} studied both problems
from the approximation point of view and given a $max\{1/k,\Omega(d^{-3})\}$-approximation algorithm for the $k$-MHV problem,
where $d$ is the maximum degree of the graph, and a $1/2$-approximation
algorithm for the $\ell$-MHE problem.

We initiate, in this work, the study of these problems from a parameterized perspective. The problem admits several natural parameters: the number of colors ($\ell$), the number of precolored vertices (say $t$), the number of happy vertices or edges (denoted by $k$, note that these parameters reflect the quality of the solution, and might hence be regarded as standard parameters), and various structural parameters. In particular, the linear time algorithms on trees prompt us to consider the question of whether the problem is \FPT{} when parameterized by treewidth of the graph ($w$). The work in~\cite{aravind2016linear} already establishes that the problem is \FPT{} when parameterized by treewidth and the number of precolored vertices. Since $\ell$-MHV and $\ell$-MHE are both \NPH{} even when there are only three colors used by the precoloring, the problems are para-\NPH{} by this parameter. We now proceed to describe some of the results that we obtain for various combinations of parameters. 

\paragraph{Our Contributions.} 
We continue the study of this problem from a parameterized prespective. Our focus is on both structural and standard parameterizations. To begin with, we establish that the problems are FPT when parameterized by the treewidth and the number of colors used in the precoloring, which is a potential improvement over the total number of precolored vertices. This follows from a MSO formulation but we also demonstrate a dynamic programming solution on nice tree decompositions. 

Further, we show that both the vertex and edge variants of the problem is \FPT{} when parameterized by vertex cover and distance-to-clique parameters. Observe that there is no exponential dependence here on the number of colors in the precoloring. We achieve this by not guessing all possible assignments of colors on the modulators, but just a wireframe of equivalence classes based on which vertices in the modulators recieve the same colors, and it turns out that this coarser information is sufficient to determine an optimal coloring. We also show that the problem of maximizing the number of happy edges is \FPT{} when parameterized by the standard parameter, the number of happy edges. This turns out to be a problem that reduces to the case of bounded vertex cover number. 

In the context of studying the problems on special classes of graphs, we show that the problem of maximizing the number of happy vertices is polynomially solvable on cographs. Unfortunately, our polynomial time approach for the $\ell$-\textsc{MHV} problem does not extend in any straightforward way to the  $\ell$-\textsc{MHE} problem. On the other hand, both variants of the problem turn out to be \NPH{} on split graphs and bipartite graphs. We note that some of the results shown here, particularly relating to algorithms parameterized by the treewidth of the graph, were obtained independently in~\cite{Akanksha2017,aravind2017algorithms}.
\section{Preliminaries}

In this section, we introduce the notation and the terminology that we will need to describe our algorithms. Most of our notation is standard. We use $[k]$ to denote the set $\{1,2,\ldots,k\}$. We introduce here the most relevant definitions, and use standard notation pertaining to graph theory based on~\cite{CyganFKLMPPS15,Diestel:2012vm}. 

All our graphs will be simple and undirected unless mentioned otherwise. For a graph $G = (V,E)$ and a vertex $v$, we use $N(v)$ and $N[v]$ to refer to the open and closed neighborhoods of $v$, respectively. The \emph{distance} between vertices $u,v$ of $G$ is the length of a shortest path from $u$ to $v$ in $G$; if no such path exists, the distance is defined to be $\infty$. A graph $G$ is said to be \emph{connected} if there is a path in $G$ from every vertex of $G$ to every other vertex of $G$. If $U\subseteq V$ and $G\left[U\right]$ is connected, then $U$ itself is said to be connected in $G$.  For a subset $S \subseteq V$, we use the notation $G\setminus S$ to refer to the graph induced by the vertex set $V \setminus S$.

\paragraph{Special Graph Classes.} We now define some of the special graph classes considered in this paper. A graph is \emph{bipartite} if its vertex set can be partitioned into two disjoint sets
such that no two vertices in same set are adjacent. A graph is a \emph{split graph} if its vertex set can be partitioned into  a clique and an independent set. Split graphs do not contain $C_4$, $C_5$ or $2K_2$ as induced subgraphs. \emph{Cographs} are $P_4$-free graphs, that is, they do not contain any induced paths on four vertices..

\paragraph{Parameterized Complexity.} A parameterized problem denoted as $(I,k)\subseteq \Sigma^*\times \mathbb{N}$, where $\Sigma$ is fixed alphabet and $k$ is called the parameter. We say that the problem $(I,k)$ is {\it fixed parameter tractable} with respect to parameter $k$ if there exists an algorithm which solves the problem in time $f(k) |I|^{O(1)}$, where $f$ is a computable function.
For a detailed survey of the methods used in parameterized complexity, we refer the reader to the texts \cite{downey2013fundamentals,CyganFKLMPPS15}.

We now define the problems that we consider in this paper.

\defparproblem{\mhv{}}{A graph $G = (V,E)$, a partial coloring $p: S \rightarrow [\ell]$ for some $S \subseteq V$, and a positive integer $k$.}{$k$}{Is there a coloring $c: V \rightarrow [\ell]$ extending $p$ such that $G$ has at least $k$ happy vertices with respect to $c$?}

\defparproblem{\mhe{}}{A graph $G = (V,E)$, a partial coloring $p: S \rightarrow [\ell]$ for some $S \subseteq V$, and a positive integer $k$.}{$k$}{Is there a coloring $c: V \rightarrow [\ell]$ extending $p$ such that $G$ has at least $k$ happy edges with respect to $c$?}

\section{Structural Parameterizations}

In this section, we explore the complexity of \mhv{} and \mhe{} with respect to various structural parameterizations. A key question here is if these problems are \FPT{} when parameterized by treewidth alone. While we do not resolve this question, we make partial progress in two ways. 

First, we show that \mhv{} and \mhe{} are both \FPT{} when parameterized by the combined parameter $(\ell + w)$, where $\ell$ is the number of colors used in the precoloring and $w$ is the treewidth of the input graph. In particular, our running time is $\OO^*((2\ell)^{(w+1)})$. It was already known that the problem is \FPT{} when parameterized by $(q + w)$, where $q$ is the number of precolored vertices. It was also known that the problem admits a linear-time algorithm on trees. In general, since $\ell \leq q$, and the treewidth of a tree is one, our result unifies these results (although the running time we obtain on trees is quadratic, rather than linear).   

Secondly, we show that  \mhv{} and \mhe{} are both \FPT{} when parameterized by the size of the vertex cover. The running time of this algorithm has a polynomial dependance on $\ell$, which is why this is not subsumed by our \FPT{} algorithm when we parameterized by $(\ell + w)$. Along similar lines, we also show that \mhv{} and \mhe{} are both \FPT{} when parameterized by the size of a clique modulator. 

The rest of this section is organized as follows. In the first subsection, we demonstrate that the problem is \FPT{} parameterized by treewidth. Next, we consider the vertex cover and distance to clique parameterizations. In the last subsection, we show how \mhe{} admits a \FPT{} algorithm for the standard parameter (the number of happy edges) by reducing it to the case of bounded vertex cover number. 

\subsection{Treewidth}
\begin{theorem}
\mhv{} and \mhe{} are both \FPT{} when parameterized by $(\ell + w)$, where $\ell$ is the number of colors used in the precoloring and $w$ is the treewidth of input graph. 
\end{theorem}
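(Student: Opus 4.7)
The plan is to give two proofs. Both begin with the observation that any extension of $p$ can be transformed, without decreasing the number of happy vertices or happy edges, so as to use only the $\ell$ colors that actually appear in $p$: any color class on a ``fresh'' color is disjoint from every precolored vertex, and merging it into an existing color class monotonically preserves or improves both objectives. Hence we may assume a palette of size exactly $\ell$.

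The short proof is via Courcelle's theorem. Introduce free set variables $X_1,\ldots,X_\ell$ and write an MSO$_1$ formula $\Pi(X_1,\ldots,X_\ell)$ stating that these sets partition $V(G)$ and agree with $p$. Vertex and edge happiness are captured by $\mathsf{happy}_V(v) \equiv \bigvee_{i \in [\ell]}\bigl(v \in X_i \wedge \forall u\,(uv \in E \Rightarrow u \in X_i)\bigr)$ and $\mathsf{happy}_E(u,v) \equiv \bigvee_{i \in [\ell]}(u \in X_i \wedge v \in X_i)$ respectively; both are MSO-definable with formulas whose size depends only on $\ell$. The optimization version of Courcelle's theorem then yields $f(\ell,w)\cdot n^{O(1)}$-time algorithms for both variants.

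The sharper bound $\OO^*((2\ell)^{w+1})$ comes from a direct dynamic program on a nice tree decomposition with introduce-vertex, introduce-edge, forget, and join nodes. For \mhv{}, a state at node $t$ is a pair $(c, H)$, where $c : B_t \to [\ell]$ extends $p$ on $B_t \cap S$ and $H \subseteq B_t$ records the bag vertices still ``potentially happy''---namely, those each of whose already-processed neighbors currently shares their color. The stored value is the maximum number of happy vertices already certified among the forgotten portion of the subtree. Transitions are routine: an introduced vertex enters $H$ vacuously; introducing an edge $uv$ with $c(u) \neq c(v)$ removes both endpoints from $H$; forgetting $v$ credits $+1$ iff $v \in H$ (by the tree-decomposition property all edges incident to $v$ have by then been introduced); at a join node, two children with the same coloring $c$ combine by intersecting their $H$-sets and summing values, with no double counting since the forgotten vertex sets in the two subtrees are disjoint. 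For \mhe{} the state is just $c$, with each happy edge credited at its introduce-edge node, giving only $\ell^{w+1}$ states per bag.

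The only place requiring genuine care is the join node for \mhv{}: a vertex in the merged bag is potentially happy exactly when it was potentially happy in \emph{both} child subtrees, which justifies the intersection $H_1 \cap H_2$; everything else is bookkeeping. Putting these ingredients together yields both claims of the theorem.
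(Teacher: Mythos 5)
Your proposal is correct and follows essentially the same route as the paper: an MSO formulation combined with Courcelle's theorem for the general \FPT{} claim, plus a dynamic program over a nice tree decomposition whose states pair a bag coloring with a subset of bag vertices tracked as (potentially) happy, giving the $\OO^*((2\ell)^{w+1})$ bound. Your variant with introduce-edge nodes, crediting happiness only at forget nodes and intersecting the happy sets at joins, is just a slightly more carefully bookkept version of the paper's table semantics, not a different approach.
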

\begin{proof}
We give two different proofs to show \mhv{} and \mhe{} are both \FPT{} when parameterized by $(\ell + w)$.
The first proof uses a standard dynamic programming approach on nice tree decompositions (see \cite{CyganFKLMPPS15} for definition). 
The second proof follows from an application of Courcelle's theorem~\cite{courcelle1990graph} and the fact that the \mhv{} and \mhe{} problems can be expressed by MSO formulas for fixed $\ell$. 

Suppose the input is a graph $G$ precolored using a partial precoloring function $p$ that uses $k$ colors. We briefly describe the MSO formulation here. The first expression below shows that $S$ is a set of happy vertices with respect to some partition of the graph into $k$ parts $V_1, \ldots V_k$.
\begin{eqnarray*}
\mbox{Happy}(S,V_1, \cdots V_k)&:=& \forall u \in V( u \in S \land u \in V_i \Rightarrow \forall v (adj(u,v) \Rightarrow v \in V_i))
\end{eqnarray*}

Next, we use a formulation to express that a given partition  $V_1, \cdots V_k$ respects a precoloring $p: V \rightarrow [k]$. This is shown by considering all pairs of vertices in the same color class and requiring that no two vertices in the same color class have the same color. 

\begin{eqnarray*}
\mbox{Good}_p(V_1, \cdots V_k)&:=& \bigwedge_{1 \leq i \leq k} \forall x,y \in V \left( (x \in V_i \land y \in V_i) \land (\bigvee_{1 \leq i \leq k} p(x) = i) \land (\bigvee_{1 \leq i \leq k} p(y) = i) \right)
\\
&&  \Rightarrow p(x) = p(y)
\end{eqnarray*}

Finally, following expression shows that there exists a total coloring of graph with $k$ colors that respects $p$ (recall that $k$ is the number of colors used by the precoloring $p$), such that all vertices of $S$ are happy. The first part of the expression ensures that the $k$ chosen sets form a partition and the second part of the expression ensures that the vertices in $S$ are happy.
\begin{eqnarray*}
\mbox{Happy}(S,k)&:=& \exists V_1, \cdots, V_k \left( \forall u \in V \left( \left(\bigvee_{1\leq i \leq k} u \in V_i \right) \land \left( u \in V_i \Rightarrow \neg(u \in \bigvee_{i \neq j} V_j ) \right) \right) \right) \\
&& \land~\mbox{Happy}(S,V_1, \cdots V_k)\\
&& \land~\mbox{Good}_p(V_1, \cdots V_k)
\end{eqnarray*}
The $k$-MHV problem is to find the maximum cardinality set $S$ such that: 

$$Happy(S):= max _S ~\mbox{Happy}(S,k).$$

Now by applying the Courcelle's theorem \cite{courcelle1990graph} to above formula the $k$-MHV problem can be solved in $f(k,tw) n ^{O(1)}$ time.

For the sake of completeness, we briefly describe the more direct and standard dynamic programming approach on nice tree decompositions. We describe here the semantics of the table and the update algorithm, and skip the proof of correctness (which is standard and easily checked) 
	
We explain the DP for the \mhv{} problem, noting that the details for \mhe{} are analogous. 

\sloppypar Let $(G,S,\ell,p,k)$ denote an instance of \mhv{}. Let $\mathcal{T} = (T,\{B_t\}_{t \in V(T)})$ be a nice tree decomposition of width $w$. For a vertex $g \in T$, $T_g$ denote the subtree rooted at $g$, let $\mathbb{B}_g$ denote the union of all the bags corresponding to vertices in $T_g$, and let $G_t$ denote the graph induced by $\mathbb{B}_g$, that is, $G[\mathbb{B}_g]$.

The table $\mathbb{T}_t$ corresponding to a bag $B_t$ is indexed by a pair $(r,S)$, where $r: B \rightarrow [\ell]$ is a labeling of the vertices in the bag with colors, and $S \subseteq B$ is a subset of vertices in the bag. We only consider labelings that are consistent with the precoloring $p$ on the vertices in $B$.

The table entry corresponding to $(r,S)$ at the bag $B_t$ is the maximum number of happy vertices that can be obtained by a coloring of $G_t$ that is consistent with the labeling $r$ on the vertices in $B$ and makes all the vertices in $S$ happy. If there is no such coloring, then the table entry is set to $-\infty$. 

We now briefly describe the updates for the different node types. Consider the bag $B_t$, and fix a labeling $r$ and a subset $S$ of $B_t$. Let $B_q$ denote the child node, and let $B_x$ and $B_y$ denote the children in the case of a join node. Note that the base cases are straightforward. 

\begin{itemize}
	\item \textbf{Introduce Node.} Suppose $v$ is introduced at bag $B_t$. Consider its color and check if all neighbors of $v$ in the bag have the same color as $v$. If this is not the case but $v \in S$ (or vice versa) then the value of the entry is $-\infty$. 	Otherwise, let $d$ be the value of $\mathbb{T}_q$ corresponding to the restriction of $(r,S)$ on $B_t \setminus \{v\}$. The entry in $B_t$ is either $d$ or $1+d$ depending on whether $v$ is happy or not.
\item \textbf{Forget Node.} Suppose $v$ is forgotten at bag $B_t$. 	Otherwise, consider all the entries in $\mathbb{T}_q$ corresponding to indices that are compatible with $(r,S)$, and the entry in the present bag is the maximum among all of these entries.
\item \textbf{Join Node.} Let $d_1$ and $d_2$ be the value of $\mathbb{T}_x$ and $\mathbb{T}_y$ corresponding to $(r,S)$ in $B_x$ and $B_y$, respectively. Let $d$ be $|S|$. The value of the table entry is then $d_1 + d_2 - d$.
 \end{itemize}

\end{proof}

\subsection{Vertex Cover and Distance to Clique}

\begin{theorem}
\label{th-vc}
\mhv{} and \mhe{} are both \FPT{} when parameterized by the size of the vertex cover of the input graph. 
\end{theorem}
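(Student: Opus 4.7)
The plan is to enumerate partitions of the vertex cover $C$ (with $|C| = k$), since every extension of the precoloring induces a partition of $C$ in which each class is monochromatic, and two extensions inducing the same partition are equivalent up to renaming of colors that do not occur in the precoloring. The number of partitions of a $k$-set is the Bell number $B(k) \le k^k$, which is \FPT{} in $k$. For each candidate partition $\mathcal{P} = \{P_1, \ldots, P_t\}$ I first discard any inconsistent $\mathcal{P}$ (a part containing two differently precolored vertices, or two parts sharing a precolored color). In a consistent partition, parts containing a precolored vertex have their color pinned; the remaining \emph{free} parts must still be assigned colors.

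For each $v \in I := V \setminus C$, the neighborhood $N(v) \subseteq C$ meets some subset $J(v)$ of the parts of $\mathcal{P}$. If $|J(v)| = 1$, say $N(v) \subseteq P_i$, then setting $c(v) = c(P_i)$ is strictly dominant: it makes $v$ happy (or at least is not worse when $v$ is precolored) and does not block any $u \in N(v)$ from happiness. The main obstacle is the case $|J(v)| \ge 2$, which is a real source of coupling only for the \mhv{} problem (in \mhe{} edge-happiness depends only on two endpoints, so color decisions on $I$ decouple cleanly). Here $v$ itself cannot be happy, but $c(v)$ still determines which neighbors $u \in N(v)$ are not blocked from becoming happy, and distinct multi-part vertices whose choices disagree can jointly render common neighbors in $C$ unhappy. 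I resolve this coupling by grouping unprecolored $v \in I$ by their type $\tau := N(v) \cap C$; there are at most $2^k$ such types, and for each multi-part type $\tau$ a uniform choice $i(\tau) \in J(\tau)$ shared by all members dominates any split choice (since any $u \in \tau$ that becomes happy must have all its type-$\tau$ neighbors agreeing on the part $i_u$). Enumerating $i(\tau)$ across multi-part types contributes a factor of at most $k^{2^k}$, still \FPT{} in $k$.

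Given a partition together with such a type-assignment, the happiness of each $u \in C$ is the conjunction of a partition condition, a condition on $c(P_{i_u})$ driven by its precolored $I$-neighbors, and a condition on the type-assignment, while the happiness of each precolored $v \in I$ with $N(v) \subseteq P_i$ depends only on $c(P_i)$. The objective therefore decomposes as a sum of per-part contributions, each depending only on the color of one free part, and the free-part colors can be chosen jointly by a polynomial-time weighted bipartite assignment between free parts and the polynomially many candidate colors (the precolors of $I$-vertices relevant to each free part, plus a fresh color). For \mhe{} the analysis is even cleaner because edge-happiness is local to two endpoints: after fixing the partition and free-part colors, each unprecolored $v \in I$ independently picks $c(v) = c(P_j)$ for the $j$ maximizing $|N(v) \cap P_j|$, while free-part colors are again determined by a polynomial assignment. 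The overall running time is $B(k) \cdot k^{2^k} \cdot n^{O(1)}$ for \mhv{} and $B(k) \cdot n^{O(1)}$ for \mhe{}, both \FPT{} in $k$.
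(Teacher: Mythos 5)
Your proposal is correct and follows the same skeleton as the paper's proof: enumerate the partition (``wireframe'') of the vertex cover induced by an optimal coloring rather than the colors themselves, reduce the choice of colors for the non-pinned parts to a maximum-weight bipartite matching between parts and candidate colors, and dispose of the independent set by dominance/greedy arguments (in particular, your treatment of \mhe{} is essentially identical to the paper's). Where you diverge is in how you resolve the one genuine coupling in \mhv{}, namely uncolored independent-set vertices whose neighborhoods meet two or more parts: the paper additionally guesses the subset $Y \subseteq X$ of vertex-cover vertices that are happy (a $2^{d}$ factor), checks a small set of consistency conditions, and then the color of every such ``definitely unhappy'' vertex is forced by its $Y$-neighbor (well-defined because non-equivalent happy vertices have disjoint neighborhoods), yielding an overall bound of roughly $(2d)^{d}\cdot n^{O(1)}$. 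You instead group these vertices by neighborhood type and enumerate a uniform part-choice per type, which you correctly argue is dominant; this is sound (a vertex's independent-set color only affects its own happiness, which is already lost, and that of its $C$-neighbors), but it costs a doubly exponential $k^{2^{k}}$ factor where the paper pays only $2^{d}$, so the paper's $Y$-guess buys a significantly better running time for the same theorem. One small point you share with the paper and should flag: the matching step implicitly assumes enough ``fresh'' colors in $[\ell]$ are available to keep distinct parts distinctly colored; when $\ell$ is small this needs the (easy) remark that a free part may also reuse any color of $[\ell]$ not occupied by another part without affecting the computed weight.
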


\begin{proof}
We first consider the \mhv{} problem. Let $(G,S,\ell,p,k)$ denote an instance of \mhv{}. We will use $Q$ to refer to $V \setminus S$, the set of vertices that are not precolored by $p$. Further, let $X \subseteq V$ be a vertex cover of $G$. We use $d$ to denote $|X|$. 

The algorithm begins by guessing a partition of $X$ into at most $\min \{\ell,d\}$ parts. Note that the number of such partitions is at most $d^d$. Intuitively, we are guessing the behavior of an optimal coloring $c$ when projected on the vertex cover, where our notion of behavior is given by which vertices are colored in the same way as $c$. We also guess the subset of vertices in the vertex cover that are happy with respect to $c$.

Let us formalize the notion of a behavior associated with a partition and the subset of $X$. To begin with, let $\chi = (X_1, \ldots, X_t)$ be a fixed partition of $X$, where $t \leq d$. For vertex $v \in X$, we abuse notation and use $\chi(v)$ to denote the index of the part that $v$ belongs to in the partition $\chi$. In other words, if $\chi(v) = i$, then $v \in X_i$. A pair of vertices $u,v \in X$ such that $\chi(u) = \chi(v)$ are called equivalent --- we sometimes say that $u$ is equivalent to $v$, or that $u$ and $v$ are equivalent. Finally, let $Y \subseteq X$ be a (possibly empty) subset of the vertex cover.

We say that a coloring $c$ is valid and respects $(\chi,Y)$ if:

\begin{itemize}
	\item $c$ agrees with $p$ on $S$,
	\item every vertex $v \in Y$ is happy with respect to $c$, and
	\item for all $u,v \in X$, $c(u) = c(v)$ if and only if $u$ and $v$ are equivalent.
\end{itemize}   

Our goal now is to find a valid coloring $c$ that respects $(\chi,Y)$. Let $\lambda(v)$ denote the set of colors employed by $p$ in $N[v]$, in other words, $\lambda(v) := \{ j ~|~ \exists u \in N[v], p(u) = j \}.$

It is easy to check that there exists a valid coloring $c$ that respects $(\chi,Y)$ if and only if the following conditions, which we will refer to as $(\star)$, hold:

\begin{itemize}
	\item For any vertex $v \in Y$, $|\lambda(v)| \leq 1$.
	\item For any pair of vertices $u,v$ that are equivalent and $u,v \in Y$, if $\lambda(v) \neq \emptyset$ and $\lambda(v) \neq \emptyset$, $\lambda(v) = \lambda(u)$.
	\item For any pair of vertices $u,v$ that are not equivalent and $u,v \in Y$, we have that $N(u) \cap N(v) = \emptyset$. 
	\item For any vertex $v \in Y$, every vertex $u \in N(v) \cap X$ is equivalent to $v$.

\end{itemize}

In particular, that these conditions are necessary follow from the definition of what it means for a coloring to be valid and respect $(\chi,Y)$. The fact that they are sufficient will follow from the coloring obtained by the algorithm below.

We assume, without loss of generality, that all the conditions above are satisfied: indeed, if not, we simply reject the choice of $(\chi,Y)$. Our algorithm now proceeds to construct a coloring $c$ of $V$ that respects $(\chi,Y)$. In fact, among all such colorings, we will construct one that maximizes the number of happy vertices. To begin with, we initialize $c$ to coincide with $p$ on $S$. For convenience, we will use $U$ to refer to the set of uncolored vertices with respect to $c$. Observe that at this stage, $U = Q$, and when the algorithm finishes, we will have $U = \emptyset$. We now proceed as follows.

\paragraph{Phase 1. Identifying forced colors.} Let $v \in Y$ be such that $\lambda(v) \neq \emptyset$. Set $c(u) = j$ for all $u \in N[v] \cap Q$.  Further, set $c(u) = j$ for all $u$ equivalent to $v$ in $\chi$. At the end of this phase, let $v$ be any vertex in $Y$ that either has a precolored vertex in its closed neighborhood or has a precolored vertex that is equivalent to it. At the end of this phase, $v$ is a happy vertex. Observe that $c$ is well-defined because the conditions in $(\star)$ are true. We say that $X_i$ is \textit{pending} if $X_i \subseteq U$ at the end of Phase 1. If no $X_i$ is pending, we skip directly to Phase 3.

\paragraph{Phase 2. Coloring the pending $X_i$'s.} Let $i$ be such that $X_i$ is pending, and let $j \in [\ell]$. Let $w[i,j]$ denote the number of vertices in $(V \setminus X) \cap S$ that will be happy if all vertices of $X_i$ are colored $j$. This is simply the size of the set of vertices in the independent set precolored $j$, whose neighborhoods lie entirely in $X_i$.

Consider an auxiliary weighted bipartite graph, denoted by $H = ((A,B),E)$ with edge weights given by $w: E \rightarrow [|V|]$. This graph is constructed as follows. The vertex set $A$ contains one vertex for every $X_i$ that is pending at the end of the first phase. The vertex set $B$ contains a vertex corresponding to every element in $[\ell] \setminus [\cup_{v \in X} c(v)]$, that is to say, $B$ has one vertex for every color that is not already used on vertices in $X$ in Phase 1. The weight of the edge $(a_i,b_j)$ is simply $w[i,j]$. We find a matching $M$ of maximum weight in $H$. It is easy to see that any such matching saturates $A$, since $|B| \geq |A|$, the weights are positive, and all edges are present. 

For a pending part $X_i$, we now color all vertices in $X_i$ based on the matching $M$. In particular, if $M$ matches $a_i$ to $b_j$, then we color all vertices in $X_i$ with color $j$. At the end of this phase, all vertices in $X$ have been colored --- formally, $X \cap U = \emptyset$.

\paragraph{Phase 3. Coloring the independent set.} We say that an uncolored vertex in $V \setminus X$ is \textit{definitely unhappy} with respect to $\chi$ if it has at least two neighbors which are not equivalent with respect to $\chi$. If a vertex $v$ that is definitely unhappy has a neighbor $u$ in $Y$, then assign $c(u)$ to $v$. Observe that this coloring is well-defined, since vertices in $Y$ that had different colors, and are hence not equivalent, have disjoint neighborhoods. Arbitrarily color all the other remaining ``definitely unhappy'' vertices, namely those that only have neighbors in $X \setminus Y$. 

Similarly, we say that an uncolored vertex in $V \setminus X$ is \textit{always happy} if all of its neighbors are equivalent. For a vertex $v$ that is always happy, let $j$ be such that all neighbors of $v$ lie in $X_j$. We then color $v$ with the same color that we used to color all vertices in $X_j$. This completes the description of the construction of the coloring --- note that at this point, $U = \emptyset$.

To conclude, we count $k^\prime[\chi,Y]$, the number of happy vertices with respect to the constructed coloring $c$. Let $k^*$ denote $\max(k^\prime[\chi,Y])$, where the $\max$ is taken over all $\chi,S$ for which there exists a valid coloring that respects $\chi,Y$. The algorithm returns \YES{} if $k^\star \geq k$ and \NO{} otherwise.

\textbf{Proof of Correctness.} [Sketch.] Let $c^*$ be an arbitrary but fixed coloring of $G$ that maxmizes the number of happy vertices. Let $\chi,S$ be the behavior of $c^*$ with respect to $X$, that is, let $S$ be the set of happy vertices in $X$ with respect to $c^*$, and let $\chi := (X_1,\ldots,X_t)$ be a partition of $X$ based on the colors given by $c^*$. Let $c$ be the coloring output by the algorithm when considering the behavior $(\chi,S)$. Note that the algorithm does output some coloring based on the characterizing nature of the conditions in $(\star)$. 

It is easy to see that $c^\star$ and $c$ agree on the colors given in Phase 1 of the algorithm. Further, note that $c$ and $c^\star$ agree on the number (and even the subset) of happy vertices in $X$. Also, among all uncolored vertices of $V \setminus X$, all definitely unhappy vertices in $V \setminus X$ are not happy in $c^\star$, while all the vertices that are always happy are happy in $c$. Among the precolored vertices in $V \setminus X$, it can be verified that the maximum number of vertices that can be happy with respect to any coloring that respects the behavior $(\chi,S)$ is precisely the weight of the maximum matching obtained in Phase 2. Indeed, any such coloring is a matching in this auxiliary graph, and the number of happy vertices corresponds exactly to the weight of the matching. It follows that the number of happy vertices in $c$ is at least the number of happy vertices in $c^*$.

\textbf{Running Time Analysis.} Trying all possible choices of $(\chi,S)$ requires time proportional to $O((2d)^d)$. For a fixed choice of $(\chi,S)$, all the three phases of the algorithm are straightforward to implement in polynomial time. A maximum matching can be computed in time $O(n+\sqrt{n}m)$ on bipartite graphs.

We now turn to the \mhe{} problem. Here the algorithm is considerably simpler. Partition $E$ into two sets $E_0$ and $E_1$, where $E_0$ is the set of all edges who have both their endpoints in $X$, and $E_1 := E \setminus E_0$. We again guess the behavior of an optimal coloring in terms of how it partitions $X$ into equivalence classes. For a fixed partition $\chi$, we count all the happy edges in $E_0$ (note that this number does not depend on what colors are given to the parts, but merely the fact that all vertices in a part are equivalent). 

Having fixed a partition, we force the colors of the parts that have precolored vertices. Construct an auxiliary bipartite graph as we did in Phase 2 of the algorithm for \mhv{}, with the only difference that now the weight of the edge $(a_i,b_j)$ is based on the number of \textit{edges} that are made happy when all vertices $X_i$ are colored with color $j$. This helps us determine a coloring of the vertices in $X$. Uncolored vertices in $V \setminus X$ can now be colored greedily: for an uncolored vertex $v \in V \setminus X$, let $d_j$ denote $|N(v) \cap X_j|$, for $1 \leq j \leq t$. Note that coloring $v$ with the same color as the one used on $\max(d_j)$ makes $\max(d_j)$ edges happy. The correctness of this approach follows from the fact that this is the best we can hope for from a coloring that is consistent with the behavior specified by $\chi$.
\end{proof}

\begin{theorem}
\mhv{} and \mhe{} are both \FPT{} when parameterized by the size of a clique modulator of the input graph. 
\end{theorem}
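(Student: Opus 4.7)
The plan is to adapt the strategy of Theorem~\ref{th-vc} to the clique-modulator setting. Let $X$ be a clique modulator of size $d$, and write $K := V \setminus X$, so that $G[K]$ is a clique.

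For \mhv{}, I would first enumerate a partition $\chi = (X_1, \ldots, X_t)$ of $X$ (at most $d^d$ options) together with a subset $Y \subseteq X$ of targeted happy vertices ($2^d$ options), checking feasibility conditions analogous to $(\star)$. The crucial structural observation is that a vertex $v \in K$ can be happy only if every vertex of $N[v]$ shares its color; since $K \setminus \{v\} \subseteq N(v)$, any happy $K$-vertex forces $K$ to be monochromatic. The analysis therefore splits into two subcases: (A) $K$ is monochromatic with some color $\alpha$, or (B) $K$ uses at least two colors and hence contributes no happy vertices of its own. For Subcase~(A) I would further enumerate $\alpha$'s role (either the color of some equivalence class $X_i$ or a fresh color, giving $O(d)$ options) and then color the equivalence classes, pinning those forced by precoloring on $K$ or by $Y$-vertices adjacent to $K$ and using a weighted bipartite matching on the remaining ones exactly as in Phase~2 of Theorem~\ref{th-vc}. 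For Subcase~(B), the $Y$-vertices adjacent to $K$ force the colors of their $K$-neighbors, and any remaining $K$-vertex can be used to ensure a second color appears in $K$. Correctness mirrors Theorem~\ref{th-vc}: the optimum induces some $(\chi^\star, Y^\star, \text{subcase}^\star)$ for which the algorithm's output is at least as good.

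For \mhe{}, I would enumerate only $\chi$. The key new ingredient will be an exchange argument: in some optimum, any two vertices $u,v \in K$ with the same $X$-neighborhood may be assumed to share their color. Concretely, if $c(u) = a \neq b = c(v)$, a direct calculation shows that the happy-edge gains of the two single-vertex swaps $v \mapsto a$ and $u \mapsto b$ sum to $+2$, so at least one of them strictly improves the objective; iterating leaves an optimum in which same-type $K$-vertices agree. There are at most $2^d$ such types, and each type can be assigned one of at most $d+1$ distinct colors (the $t$ colors used in $X$ plus a single canonical fresh color, the uniqueness of which follows from a similar swap argument). Enumerating the $(d+1)^{2^d}$ assignments per $\chi$, verifying consistency with the precoloring, and counting happy edges in polynomial time yields an overall running time of $d^d \cdot (d+1)^{2^d} \cdot \mathrm{poly}(n)$, which is \FPT{} in $d$.

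The hardest step will be the \mhe{} exchange argument, which must simultaneously account for edges within $K$ and edges between $K$ and $X$; this double bookkeeping is precisely what yields the crucial $+2$ sum, and without it one would have to enumerate coloring choices on individual $K$-vertices, of which there may be arbitrarily many.
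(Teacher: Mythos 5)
Your \mhv{} half is essentially the paper's own argument: guess the color-partition of the modulator $X$ and the subset of $X$ that is to be happy, use the observation that a happy clique vertex forces the whole clique to be monochromatic, and resolve the clique's color by a bounded enumeration plus consistency with the precoloring. (The paper organizes this as a case split on $\ell\le d+1$ versus $\ell>d+1$ rather than your ``role of $\alpha$'' enumeration, but the content is the same, and your matching step in Case~(A) is harmless though not actually needed for counting happy vertices.)

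The \mhe{} half has a genuine gap. Your exchange computation is correct as far as it goes: for two \emph{uncolored} clique vertices $u,v$ with the same $X$-neighborhood the gains of the two swaps sum to $+2$, since the $X$-contributions cancel and the edge $uv$ contributes $+1$ to each swap --- but note it must be restricted to uncolored vertices, since precolored ones cannot be recolored. The real problem is the candidate color set you then enumerate. A type of free clique vertices --- and likewise an equivalence class of $X$ containing no precolored vertex, for which you never say how a color is chosen --- may have to receive a color that occurs \emph{only in the precoloring of the clique}. For instance, if many clique vertices are precolored $\gamma$ and a few are uncolored, every optimal coloring gives the uncolored ones $\gamma$, which is neither ``the color of some $X_i$'' nor a fresh color, so your $(d+1)^{2^d}$ enumeration never sees the optimum; and since the clique's precoloring can use unboundedly many colors, you cannot just add them all to the list. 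This is precisely where the paper does extra work: Lemma~\ref{lem-all} shows that when more than $d+1$ clique vertices are uncolored they all take a single color (essentially the largest precolor class of the clique), the case of few uncolored clique vertices is handled by absorbing them into the modulator, and the actual colors of the classes of $X$ are then chosen by a maximum-weight bipartite matching between classes and the colors used in the clique (distinct classes needing distinct colors is what forces a matching rather than independent greedy choices, exactly as in Phase~2 of Theorem~\ref{th-vc}). To repair your plan you would need either this matching step or an argument bounding which precolored-clique colors can be relevant for each type and each free class; as written, the enumeration can miss the optimum.
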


\begin{proof}

Let $(G,S,\ell,p,k)$ denote an instance of \mhv{}. We will use $Q$ to refer to $V \setminus S$, the set of vertices that are not precolored by $p$. Further, let $X \subseteq V$ such that $C=G \setminus X$ is a clique. We use $d$ to denote $|X|$.

If $\ell > d+1$ then there exists at least two vertices $u$ and $v$ in $C$ such that $p(u) \neq p(v)$, which implies no vertex of $C$ is happy. First we guess the partition $(H,U)$ of $X$ in $O(2^d)$ time, where $H$ and $U$ denotes the happy and unhappy vertices of $X$ in optimal coloring $c$.

Let $H=(H_1, \cdots, H_t)$ be the partition of $H$ such that all vertices in set $H_i$, $i \in [t]$, are 
colored with the same color by $c$. We can guess the correct partition
in $O(d^d)$ time. Note that $N(H_i) \cap N(H_j) = \emptyset$: suppose $v \in N(H_i) \cap N(H_j)$ then the color of vertex $v$ is either different
from $col(H_i)$ or $col(H_j)$ which is a contradiction to the fact that all vertices in both $H_i$ and $H_j$ are happy. 

Since $H_i$ is happy there does not exist two vertices $u$ and $v$ in the set $H_i \cup N(H_i)$ such that $p(u) \neq p(v)$.
For each $i \in [t]$, if at least one vertex is precolored in $H_i \cup N(H_i)$ then assign same color to all vertices of $H_i \cup N(H_i)$.
For the sets $H_i \cup N(H_i)$, which do not have any precolored vertices,  assign a color which is not used so far to color any $H_i$. At the end arbitrarily color remaining vertices. 
For each possible partition $H$ and $U$ of $X$ and each possible partition  $H_1, \cdots, H_t$ of $H$, count the number  
of happy vertices and the optimal coloring $c$ is the coloring which maximizes the number of happy vertices. 

If $\ell \leq d+1$ then some of the clique vertices can be happy, but this only happens when all the clique vertices are colored by same color. Since there are at most $\ell$ colors we can guess the correct coloring of the clique in $O(\ell)$ time. 
Now it remained to color the set $X$, which can be done using the procedure described in case of $l>d+1$.

We now turn to the \mhe{} problem.
First we give a procedure to color $X$ when all the vertices in clique $C=G \setminus X$ are precolored and no vertex in $X$ is precolored.
Let $X=(X_1, \cdots, X_t)$ be the partition of $X$ such that all vertices in $X_i$ are colored in the same way by $c$. 

Let $w[i,j]$ denote the number of edges in $G[X_i \cup C]$ that will be happy if all vertices of $X_i$ are colored $j$. 
Consider an auxiliary weighted bipartite graph, denoted by $D = ((A,B),E)$ with edge weights given by $w: E \rightarrow [|E|]$. This graph is constructed as follows. The vertex set $A$ contains one vertex for every set $X_i$ and the vertex set $B$ contains one vertex 
for every color used in clique. The weight of the edge $(a_i,b_j)$ is simply $w[i,j]$. We find a matching $M$ of maximum weight in $D$. It is easy to see that any such matching saturates $A$, since $|B| \geq |A|$, the weights are positive, and all edges are present. 
We now color all vertices in $X_i$ based on the matching $M$. In particular, if $M$ matches $a_i$ to $b_j$, then we color all vertices in $X_i$ with color $j$. At the end of this phase, all vertices in $X$ have been colored. Now we are ready to describe the general case. 
Let $C_U$ be the number of uncolored vertices in clique. If $|C_U|\leq d+1$ then $X'=X \cup C_U$ is a vertex deletion distance to clique $C'$ of size at most $2d+1$, i.e., $G \setminus (X \cup C_U)$ is a clique. Since all the vertices of the clique $G \setminus (X \cup C_U)$ are precolored, The vertices of $X'$ can be colored using the procedure described above. So without loss of generality we assume that $|C_U|>d+1$. 
 \begin{lemma}\label{lem-all}
 If $|C_U|=n_1>d+1$, then in any optimal coloring $c$ all non precolored vertices in clique has to be colored with single color.  
 \end{lemma}
\begin{proof}
Since $\ell$ colors are used in precoloring there exists a color class  of size at least $\lfloor \frac{n-n_1}{\ell}\rfloor$ in $C$.
Assigning this color to all vertices of $C_U$ maximizes the number of happy edges, since if we color a vertex of $C_U$ with different color than others, then we loose at least $d+1$ happy edges in clique $C_U$ and can make at most $d$ edges happy. 
\end{proof}
 
The case when $\ell \leq d+1$ is easy, since we can guess optimal coloring of $X$ in time $O(d^d)$ and then it 
can be easily extended to color clique. 
If $\ell >d+1$, From Lemma~\ref{lem-all} we know that all vertices of $C_U$ gets same color, we guess this color in $O(\ell)$ time.
Now we need to color $X$ such that the number of happy edges is maximized. 
This can be done by simply applying the procedure describe in first case, where all vertices of clique are precolored. 
The running time of the algorithm is $O( \ell(d^{d})(n+\sqrt{n}m)( n + m ))$.
\end{proof}

\subsection{The Standard Parameter}

We finally show that \mhe{} is, in fact, \FPT{} when parameterized by the number of happy edges. Here we use the fact that if there are enough edges both of whose endpoints are uncolored, then we have a \YES{} instance right away. If not, the number of uncolored vertices can be shown to be bounded, and since it is safe to delete edges among precolored vertices (with some bookkeeping), the problem effectively reduces to the bounded vertex cover number scenario.  


\begin{lemma}
\mhe{} is \FPT{} when parameterized by $k$.
\end{lemma}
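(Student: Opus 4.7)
The plan is to reduce in \FPT{} time to an instance of \mhe{} whose graph admits a vertex cover of size $O(k)$, at which point Theorem~\ref{th-vc} applies. Write $U := V \setminus S$ for the set of uncolored vertices. As a first bookkeeping step, note that every edge with both endpoints in $S$ has its happiness fixed by $p$; let $a$ count the precolored-precolored edges that are already happy. If $a \geq k$ we immediately answer \YES{}; otherwise set $k' := k - a$ and search for a coloring producing at least $k'$ further happy edges, each involving at least one vertex of $U$. Deleting from $G$ all edges with both endpoints in $S$ yields a graph $G'$ in which $U$ is a vertex cover, and solving \mhe{} on $(G',p,k')$ is equivalent to this residual task.

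The crux is to reduce to the case $|U| = O(k)$ through a short series of \YES{}-certificates. First, if $|E(G[U])| \geq k'$, assigning every vertex of $U$ a single common color makes every edge of $G[U]$ happy, certifying \YES{}. Otherwise $|E(G[U])| < k'$, so the set $C$ of vertices of $U$ that are non-isolated in $G[U]$ satisfies $|C| \leq 2(k'-1)$; put $I := U \setminus C$, which is independent in $G[U]$. After discarding vertices isolated in all of $G$ (which cannot participate in any edge), split $I$ into $I_1$, consisting of those $v \in I$ with at least one precolored neighbor, and $I_2 := I \setminus I_1$, whose only neighbors lie in $C$. If $|I_1| \geq k'$, then coloring each $v \in I_1$ to agree with one of its precolored neighbors yields $|I_1|$ distinct happy edges, certifying \YES{}. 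If $|I_2| \geq k'$, then since every $v \in I_2$ has at least one neighbor in $C$, coloring all of $C \cup I_2$ with one common color makes at least $|I_2| \geq k'$ edges of $E(C, I_2)$ happy, again certifying \YES{}.

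In the remaining case $|U| = |C| + |I_1| + |I_2| < 4k'$, so $U$ is a vertex cover of $G'$ of size $O(k)$. Feeding the instance $(G',p,k')$ together with the vertex cover $U$ into the algorithm of Theorem~\ref{th-vc} completes the solution in \FPT{} time parameterized by $k$.

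The main obstacle I anticipate is the case analysis on $I$: one must carefully separate vertices whose happiness can be ``purchased'' against a precolored neighbor (contributing one new happy edge each) from those that must be satisfied through the small set $C$ (contributing via $E(C,I_2)$), and verify in each \YES{}-branch that the produced happy edges are genuinely new rather than already tallied in $a$. Both concerns are handled cleanly once we pass to $G'$ and set aside the $a$ happy precolored-precolored edges at the outset; the rest is routine bookkeeping.
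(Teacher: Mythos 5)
Your argument is correct and follows essentially the same route as the paper's proof: after setting aside the $a$ happy edges inside $S$, answer \YES{} if $G[U]$ has at least $k'$ edges or if many uncolored vertices can each buy a distinct happy edge from a precolored neighbor, and otherwise $U$ is a vertex cover of the residual graph of size $O(k)$, so Theorem~\ref{th-vc} finishes the job. The only blemish is the description of $I_2$: vertices of $I$ are isolated in $G[U]$ and hence have no neighbors in $C$ at all, so once $G$-isolated vertices are discarded $I_2$ is empty and that branch is vacuous (harmlessly so); if anything, your explicit handling of isolated uncolored vertices is a bit more careful than the paper's ``$|Q|\geq 2k$ implies \YES{}'' step.
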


\begin{proof}
Let $(G,S,\ell,p,k)$ be an instance of \mhe{} and let $Q$ be
the set of vertices that are not precolored by $p$. Without loss of generality, we assume that no edge in $G[S]$ is happy, since if there are some happy edges in $G[S]$ then we remove them from $G$ and reduce the value of $k$.

If the number of edges in graph $G[Q]$ is at least $k$ then the algorithm returns \YES{}, since by coloring all  vertices in $Q$ with same color we can make at least $k$ edges happy.

If the number of vertices in $Q$ is at least $2k$ then the algorithm returns \YES{}. We can make at least $k$ edges happy as follows. Let $Q_1= \{x \in Q ~|~ N(x) \cap S \neq \emptyset\}$ and $Q_2= Q \setminus Q_1$. For all $v \in Q_1$ color $v$ with one of its neighbors color.  To color $Q_2$, we repeat the following procedure untill all the vertices of $Q_2$ are colored. For all $v \in Q_2$ which have colored neighbors in $Q_1$ assign one of its neighbors color to $v$.  

Therefore without loss of generality we assume $|Q|\leq 2k$.
Since all the edges in $G[S]$ are unhappy we can simply remove them from $G$. 
It is easy to see that in the resulting graph the set $Q$ is a vertex cover of size at most $2k$. 
Therefore the problem reduced to solving \mhe{} parameterized by vertex cover number which can be solved using 
Theorem~\ref{th-vc}.
\end{proof}

\begin{lemma}
\mhv{} is \FPT{} when parameterized by $k$ and $\ell$.
\end{lemma}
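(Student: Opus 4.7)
The plan is to identify a small set of candidate happy vertices bounded by $k\ell$ and then enumerate colorings over this set. For each color $j \in [\ell]$, I would define
\[
A_j := \{v \in V : (v \in Q \text{ or } p(v) = j) \text{ and every } u \in N(v) \cap S \text{ has } p(u) = j \}.
\]
The key observation is that if some extension of $p$ makes $v$ happy with color $j$, then $v \in A_j$: the happiness of $v$ forces $N[v]$ to be monochromatic in color $j$, and this must agree with $p$ on the precolored vertices of $N[v]$. The first (win-win) step is that if $|A_j| \geq k$ for some $j$, I would return \YES{}, witnessed by the extension that colors every uncolored vertex of $A_j \cup N(A_j)$ with $j$. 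This extension is consistent with $p$ since every precolored vertex of $A_j \cup N(A_j)$ already has color $j$, and every $v \in A_j$ becomes happy because $N[v] \subseteq A_j \cup N(A_j)$ is entirely colored $j$.

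If no $A_j$ has size at least $k$, then $C := \bigcup_{j \in [\ell]} A_j$ satisfies $|C| < k\ell$, and every potentially happy vertex lies in $C$. I would then enumerate all colorings $c : C \to [\ell]$ extending $p|_{C \cap S}$, of which there are at most $\ell^{k\ell}$. For each such $c$, I compute
\[
H_c := \{v \in C : v \in A_{c(v)} \text{ and } c(u) = c(v) \text{ for every } u \in N(v) \cap C\},
\]
the vertices whose happiness has not already been precluded by the choices made on $C$ and $S$. I would then enumerate all subsets $T \subseteq H_c$ with $|T| \geq k$ (at most $2^{k\ell}$ of them) and check the feasibility condition: for every $u \in Q \setminus C$ with $N(u) \cap T \neq \emptyset$, the set $\{c(v) : v \in N(u) \cap T\}$ is a singleton. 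If some pair $(c, T)$ passes this test, I return \YES{}: color each such $u \in Q \setminus C$ with the common required color, color any other uncolored vertex arbitrarily, and every vertex of $T$ is happy by construction.

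For correctness, the forward direction is immediate from the described extension. The main step is completeness: given an optimum extension $c^\star$ with a happy set $H^\star$ of size at least $k$, I would argue that $(c^\star|_C, H^\star)$ is a pair captured by the enumeration. We have $H^\star \subseteq C$ since every happy vertex is a candidate, $H^\star \subseteq H_{c^\star|_C}$ since all neighbors of each $v \in H^\star$ inherit color $c^\star(v)$ under $c^\star$, and the feasibility condition holds because $c^\star(u) = c^\star(v)$ for every $u \in N(H^\star) \cap (Q \setminus C)$ and every $v \in N(u) \cap H^\star$. The total running time is $\ell^{k\ell} \cdot 2^{k\ell} \cdot n^{O(1)}$, which is \FPT{} in $(k, \ell)$. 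The main obstacle is formulating the feasibility condition so that it correctly captures the joint constraint arising when multiple happy vertices share neighbors in $Q \setminus C$.
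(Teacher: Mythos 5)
Your proof is correct, and it shares the paper's opening idea but finishes differently. Like the paper, you begin with a win--win argument that bounds the number of candidate happy vertices by $O(k\ell)$: the paper declares \YES{} if some color class among the ``not definitely unhappy'' precolored vertices has size at least $k$ or if there are at least $\ell k$ uncolored candidates, while you fold both checks into the single, cleaner condition $|A_j|\geq k$ for some color $j$ (your color-indexed candidate sets also quietly avoid an edge case the paper glosses over, namely a precolored vertex whose unique neighboring precolor differs from its own color). The divergence is in the second half: the paper deletes the edges among definitely unhappy vertices, observes that the remaining candidate set is a vertex cover of size at most $2\ell k$, and invokes its vertex-cover algorithm (Theorem~\ref{th-vc}), whereas you solve the small core directly by enumerating the at most $\ell^{k\ell}$ colorings of $C$, the subsets $T\subseteq H_c$, and a feasibility test ensuring that uncolored vertices outside $C$ adjacent to $T$ see only one demanded color. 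Your feasibility condition does correctly capture the only remaining interaction (shared uncolored neighbors outside $C$), and both directions of your equivalence check out, so the argument is complete. What each route buys: yours is self-contained and does not depend on the correctness of the vertex-cover algorithm; the paper's is more modular, reusing Theorem~\ref{th-vc} in the same way it does for the standard parameterization of \mhe{}, at the cost of the extra ``delete edges between definitely unhappy vertices'' step needed to make the candidate set a genuine vertex cover. The running times are comparable, both of the form $(\ell+2)^{O(k\ell)}\cdot n^{O(1)}$.
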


\begin{proof}
 Let $(G,S,\ell,p,k)$ denote an instance of \mhv{} and let $Q$ be
the set of vertices that are not precolored by $p$. Let $S_U \subseteq S$ and $Q_U \subseteq Q$ such that every vertex in $S_U \cup Q_U$ have at least two neighbors with different colors. Let $S_H= S \setminus S_U$ and $Q_H= Q \setminus Q_U$.

If the size of any color class in $S_H$ is at least $k$ then the algorithm returns \YES{}. 
We can make at least $k$ vertices happy by coloring all neighbors of
vertices in this color class with same color. Therefore we assume that size of each color class in $S_H$ is at most $k$, which implies size of $S_H$ is at most $\ell k$. 

If the size of set $Q_H$ is at least $\ell k$  then the algorithm returns \YES{}. 
There exists a subset $A$ of vertices of $Q_H$ of size at least $k$
such that each vertex of $A$ either has neighbors in same color class or has no colored neighbors. 
By coloring all vertices of $A$ and its uncolored neighbors with neighbors color we can make all vertices in $A$ happy.
Therefore we assume that size of $Q_H$ is less than $\ell k$.

It is easy to see that all vertices in $S_U \cup Q_U$ are unhappy, therefore we can remove the edges between them. 
Then the set $S_H \cup Q_H$ forms a vertex cover of size at most $2 \ell k$. Therefore the problem reduced to solving \mhv{} parameterized by vertex cover and number of colors, which can be solved using 
Theorem~\ref{th-vc}.
\end{proof}





\section{Special Graph Classes}

\begin{theorem}
\mhv{} and \mhe{} are both \NPC{} on the class of bipartite and split graphs. 
\end{theorem}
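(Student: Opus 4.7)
The plan is to prove both parts by polynomial-time reductions from the general-graph variants $3$-\mhv{} and $3$-\mhe{}, which are NP-hard by Zhang et al.~\cite{zhang2015algorithmic}. Membership in \NP{} is routine, since one can verify a completion and count happy vertices or happy edges in polynomial time, so the bulk of the work is hardness. In each of the four subcases, the reduction will take a precolored input $(G,p,k)$ and produce either a bipartite or a split instance $(G',p',k')$ where $k'$ differs from $k$ by a fixed, easily computable additive shift.

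For the bipartite case I plan to use the standard edge-subdivision trick. Given $(G,p,k)$ with $m$ edges, construct $G'$ by subdividing every edge $uv \in E(G)$ with a fresh uncolored vertex $w_{uv}$, keeping $p$ unchanged on $V(G)$. The original vertices form one side of the bipartition and the subdivision vertices the other. For \mhe{}, each path $u$--$w_{uv}$--$v$ contributes two happy edges when $c(u) = c(v)$ and exactly one otherwise (since an optimal completion always colors $w_{uv}$ to agree with one endpoint), giving the clean identity that the optimum number of happy edges in $G'$ is $m$ plus the optimum in $G$. Setting $k' := m + k$ then yields the equivalence. For \mhv{}, I plan to adapt the subdivision by attaching small degree-boosting gadgets so that, in any optimal extension on $G'$, each subdivision vertex is forced to match both of its endpoints whenever those agree; under that normalisation, the happy original vertices in $G'$ are exactly the happy vertices of $G$.

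For split graphs, the plan is to reduce from the same general problems by placing $V(G)$ as the independent side of $G'$ and adding a clique of carefully precolored ``marker'' vertices -- one per colour used by $p$ -- whose adjacencies to the independent side encode the edges of $G$. Padding vertices in the clique will be included to ensure that only a fixed, predictable number of clique vertices can ever become happy, so that all variability in the happy-count is concentrated on the independent side and matches that of $G$ after a constant offset.

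The principal technical obstacle in both cases will be establishing the exact correspondence of happy counts. For \mhv{} on bipartite graphs, the happiness of an original vertex $u$ in $G'$ is coupled to the colours of every incident subdivision vertex, so I will need to argue that any optimal completion can be normalised -- without losing any happy vertex -- so that each $w_{uv}$ matches both endpoints whenever they agree. For split graphs the difficulty is dual: since any happy clique vertex forces all other clique vertices to share its colour, the marker clique must be engineered so that only a prescribed, instance-independent number of its vertices can ever be happy in any completion, with all ``variable'' happy items arising from the independent side.
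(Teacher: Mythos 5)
Your \mhe{} reduction for bipartite graphs is exactly the paper's: subdividing every edge is the same as the paper's incidence construction with a vertex $a_v$ per vertex and $b_e$ per edge, the shift $k' = m+k$ is the same, and the counting argument goes through. The rest of the proposal, however, has genuine gaps. For split graphs your construction cannot work as described: a clique containing one precolored ``marker'' per colour (three markers, say) adjacent to the independent copy of $V(G)$ simply cannot encode the edge set of an arbitrary graph $G$ --- the happiness of a vertex or edge in such a $G'$ would depend only on its own colour and the fixed marker colours, and all information about which pairs of original vertices are adjacent is lost. The paper's construction is essentially the opposite orientation: the \emph{edges} of $G$ become the clique (one clique vertex $b_e$ per edge), the original vertices sit on the independent side, and --- this is the key quantitative device you are missing --- each original vertex is replicated $T = {m \choose 2}+1$ times, so that the at most ${m \choose 2}$ happy edges internal to the clique can never compensate for a deficit on the vertex--edge side, with threshold $k' = T(m+k)$. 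Your vague ``padding vertices so that only a fixed number of clique vertices can become happy'' gestures at this, but for \mhe{} the uncontrolled quantity is the clique's \emph{edges}, not its vertices, and no mechanism is given to neutralize them.

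The \mhv{} part on bipartite graphs also has a real gap. Your normalization claim --- that forcing each subdivision vertex $w_{uv}$ to agree with both endpoints whenever they agree makes the happy original vertices of $G'$ exactly the happy vertices of $G$ --- is false as stated. When $c(u) \neq c(v)$, the vertex $w_{uv}$ takes the colour of one endpoint, and the conflicting edge is thereby ``absorbed'': a vertex $u$ that is unhappy in $G$ can become happy in $G'$ if every incident subdivision vertex happens to take $u$'s colour, so the happy count can strictly increase and the reverse direction of the reduction breaks. You also do not account for subdivision vertices themselves being happy, which adds a further uncontrolled term. The paper sidesteps all of this by not subdividing at all for \mhv{}: it obtains bipartite and split hardness for \mhv{} by modifying the original NP-hardness reduction of Zhang and Li~\cite{zhang2015algorithmic} directly, rather than reducing from \mhv{} on general graphs. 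If you want to keep your route, you must either design gadgets that pin each $w_{uv}$ to a prescribed endpoint's colour (and then argue both directions of the count exactly, including the subdivision vertices), or abandon the subdivision idea for the vertex variant.
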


\begin{proof}
The reductions for \mhv{} follow by easy modifications of the reduction in~\cite{zhang2015algorithmic}. Here, therefore, we only state the proofs for \mhe{}. We first consider the case of bipartite graphs. We reduce from \mhe{} on general graphs. 

We let $(G, \ell, S, p, k)$ be an instance of \mhe{}. Construct a bipartite graph $(H = (A,B), E)$ as follows. For every vertex $v \in V(G)$, we introduce a vertex $a_v \in A$. For every edge $e \in E(G)$, we introduce a vertex $b_e \in B$, and if $e = (u,v)$, then $b_e$ is adjacent to $a_u$ and $a_v$. The precoloring function $q$ mimics $p$ on $A$, that is, for every $u \in S$, $q(a_u) = p(u)$. We use $X$ to denote $\{a_u ~|~ u \in S\} \subseteq A$. Let $k^\prime = m + k$. Thus our reduced instance is $(H, \ell, X, q, k^\prime)$. 


We now argue the equivalence. First, consider the forward direction. If $c$ is a total coloring of $V$ that makes $k$ edges happy, then we define a coloring $c^\prime$ for $H$ as follows: color $c^\prime(a_v) := c(v)$ for all $a_v \in A$. For every edge $e = (u,v) \in E$, color $b_e \in B$ according to $c(u)$. Note that for all edges $e$ in $G$ that are happy with respect $c$, two edges (namely $(b_e,a_v)$ and $(b_e,a_u)$) are happy with respect to $c^\prime$. Corresponding to all unhappy edges, $H$ has one happy edge with respect to $c^\prime$. Therefore, the total number of happy edges in $c^\prime$ is $2k + (m-k) = m+k$. 

In the reverse direction, let $c^\prime$ be a coloring of $H$ that makes at least $m+k$ edges happy. Now consider the coloring $c$ obtained as follows: $c(u) = c^\prime(a_u)$. We argue that at least $k$ edges are happy in $G$ with respect to $c$. Indeed, suppose not. Without loss of generality, assume that only $(k-1)$ edges are happy with respect to $c$. Then in $H$, there are at most $(k-1)$ vertices in $B$ that can have two happy edges incident on them, and therefore the total number of happy edges is at most $2(k-1) + (m-k+1) = m + k - 1$, which contradicts our assumption about the total number of happy edges in $H$ with respect to $c^\prime$.

We now turn to the case of split graphs. The construction is similar to the case of bipartite graphs. Construct a split graph $(H = (A,B), E)$ as follows. Let $(G, \ell, S, p, k)$ be an instance of \mhe{}. Let $T := {m \choose 2} + 1$. For every vertex $v \in V(G)$, we introduce $T$ copies of the vertex $a_v \in A$, denoted by $a_v[1], \ldots, a_v[T]$. For every edge $e \in E(G)$, we introduce a vertex $b_e \in B$, and if $e = (u,v)$, then $b_e$ is adjacent to all copies of $a_u$ and $a_v$. Finally, we add all edges among vertices in $B$, thereby making $H[B]$ a clique.

The precoloring function $q$ mimics $p$ on $A$ across all copies, that is, for every $u \in S$, $q(a_u) = p(u)$ for all copies of $a_u$. We use $X$ to denote $\{a_u ~|~ u \in S\} \subseteq A$. Let $k^\prime = T(m + k)$. Thus our reduced instance is $(H, \ell, X, q, k^\prime)$. 

We now argue the equivalence of these instances. First, consider the forward direction. If $c$ is a total coloring of $V$ that makes $k$ edges happy, then we define a coloring $c^\prime$ for $H$ as follows: color $c^\prime(a_v) := c(v)$ for all copies of $a_v \in A$. For every edge $e = (u,v) \in E$, color $b_e \in B$ according to $c(u)$. Note that for all edges $e$ in $G$ that are happy with respect $c$, $2T$ edges (namely $(b_e,a_v)$ and $(b_e,a_u)$ across all copies) are happy with respect to $c^\prime$. Corresponding to all unhappy edges, $H$ has $T$ happy edges with respect to $c^\prime$. Therefore, the total number of happy edges in $c^\prime$ is at least $2Tk + T \cdot (m-k) = T \cdot (m+k)$. 

In the reverse direction, , let $c^\prime$ be a coloring of $H$ that makes at least $T(m+k)$ edges happy. We argue that there must be at least one copy $A_i$ of $\{a_v ~|~ v \in V\}$ for which the number of happy edges with one endpoint in $B$ and one in $A_i$ is at least $(m+k)$. Indeed, suppose not. Then consider the following partition of the edges in $H$: let $E_0$ be all edges with both endpoints in $B$, and let $E_i$ be all edges with one endpoint in $B$ and the other endpoint in the $i^{th}$ copy of the vertices $\{a_v ~|~ v \in V\}$. For the sake of contradiction, we have assumed that the number of happy edges in $E_i$ is less than $(m+k)$ for all $i \in [T]$. Note that the total number of edges in $B$ is ${m \choose 2}$. Therefore, the the number of edges happy with respect to $c^\prime$ is at most:
$${m \choose 2} + T \cdot (m+k-1) = T \cdot (m+k) + {m \choose 2} - T < T(m+k),$$
where the last step follows by substituting for $T = {m \choose 2} + 1$. This leads to the desired contradiction. 

Having identified one set $E_i$ that has at least $(m+k)$ happy edges, the argument for recovering a coloring $c$ for $G$ that makes at least $k$ edges happy is identical to the case of bipartite graphs.
\end{proof}


We now turn to the \mhv{} problem on the class of cographs. 
We use the notion of modular decomposition to solve this problem on cographs. 

A set $M \subseteq V(G)$ is called {\it module} of $G$ if all vertices of $M$ have the same set 
of neighbors in $V(G)\setminus M$.  
The \emph{trivial modules} are $V(G)$, and $\{v\}$ for all $v$.
A prime  graph is a graph in which all
modules are trivial.
The modular decomposition of a graph is one of the decomposition techniques which was introduced by Gallai \cite{gallai1967transitiv}. The {\it modular decomposition} of a graph $G$ is a rooted tree $M_G$ 
that has the following properties:
\begin{enumerate}
\setlength{\itemsep}{1pt}
\setlength{\parskip}{0pt}
 \item The leaves of $M_G$ are the vertices of $G$.
 \item For an internal node $h$ of $M_G$, let $M(h)$ be the set of vertices of $G$ that are leaves of the subtree of 
 $M_G$ rooted at $h$. ($M(h)$ forms a module in $G$).
 \item For each internal node $h$ of $M_G$ there is a graph $G_h$ (\emph{representative graph}) with 
 $V(G_h)=\{h_1,h_2,\cdots,h_r\}$, where $h_1,h_2,\cdots,h_r$ are the 
 children of $h$ in $M_G$ and for  $1 \leq i<j \leq r$, $h_i$ and $h_j$ are adjacent in $G_h$ iff there are vertices $u \in M(h_i)$ and $v \in M(h_j)$ that are adjacent in $G$.
 \item $G_h$ is either a clique, an independent set, or a prime graph and 
 $h$ is labeled \emph{Series} if $G_h$ is clique, \emph{Parallel} if $G_h$ is an independent set, and \emph{Prime} otherwise.
\end{enumerate}
James et al. \cite{james1972graph} gave first polynomial time algorithm for 
 modular decomposition which runs in $O(n^4)$ time. 

\begin{theorem}
\mhv{} is polynomial time solvable on the class of cographs. 
\end{theorem}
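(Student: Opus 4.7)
The plan is to design a dynamic programming algorithm on the cotree of $G$. Since a cograph is exactly a graph whose modular decomposition contains no prime nodes, every internal node of the cotree is either Series (corresponding to a join of its children's subgraphs) or Parallel (a disjoint union), and the cotree can be built in polynomial time by the cited algorithm of James et al. For each cotree node $h$, let $M(h)$ denote the set of $G$-vertices appearing as leaves of the subtree at $h$, and $N_{\mathrm{ext}}(h) := N_G(M(h)) \setminus M(h)$ the external neighborhood of the module $M(h)$. Because $M(h)$ is a module, every $v \in M(h)$ sees exactly $N_{\mathrm{ext}}(h)$ outside $M(h)$, so $v$ is happy under a coloring $c$ if and only if $c$ is monochromatic with value $c(v)$ on both $N(v) \cap M(h)$ and $N_{\mathrm{ext}}(h)$.

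For each cotree node $h$ the DP stores, for each color $\gamma \in [\ell]$, a value $f(h,\gamma)$ equal to the maximum number of happy vertices of $M(h)$ over all extensions of $p|_{M(h) \cap S}$ under the hypothesis that $N_{\mathrm{ext}}(h)$ is colored monochromatically with $\gamma$; a root-only variant $f(h,\star)$ for $N_{\mathrm{ext}}(h) = \emptyset$; and a feasibility quantity $g(h,\gamma)$ equal to $|M(h)|$ if $p|_{M(h)}$ uses only the color $\gamma$ (so that $M(h)$ admits a monochromatic-$\gamma$ extension) and $-\infty$ otherwise. Leaves are handled by inspection. At a Parallel node with children $h_1, \ldots, h_t$ we have $N_{\mathrm{ext}}(h_i) = N_{\mathrm{ext}}(h)$ and the children are mutually non-adjacent, so the computations decouple: $f(h,\gamma) = \sum_{i} f(h_i,\gamma)$, and analogously for the $\star$ and $g$ variants.

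The main technical step is the Series-node transition. Suppose external is monochromatic $\gamma$. If a happy vertex $v$ lies in $M(h_i)$, then $v$ is adjacent to every vertex of $\bigcup_{j \neq i} M(h_j)$, forcing that union to be monochromatic with $c(v)$, which must equal $\gamma$ since $N_{\mathrm{ext}}(h)$ is already colored $\gamma$. The same argument applied to a hypothetical happy vertex in a second child would require two distinct children to be simultaneously monochromatic with possibly different colors, a contradiction, so happy vertices are confined to at most one ``special'' child. This yields the dichotomy: either (A) every $M(h_i)$ is monochromatic $\gamma$, giving $|M(h)|$ happy vertices (feasible iff $g(h_i,\gamma) \neq -\infty$ for every $i$), or (B) exactly one child $h_{j^*}$ is spared and contributes $f(h_{j^*},\gamma)$ while the rest are forced monochromatic $\gamma$ (feasible iff $g(h_j,\gamma) \neq -\infty$ for all $j \neq j^*$). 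Letting $k_\gamma$ be the number of children with $g(h_i,\gamma) = -\infty$, the Series transition collapses to: $f(h,\gamma) = |M(h)|$ if $k_\gamma = 0$; $f(h,\gamma) = f(h_{j^*},\gamma)$ if $k_\gamma = 1$ with $j^*$ the unique infeasible child; and $f(h,\gamma) = 0$ otherwise. Moreover $g(h,\gamma) = |M(h)|$ if $k_\gamma = 0$ and $-\infty$ otherwise.

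The final answer is $f(r,\star)$ at the root $r$: if $r$ is Parallel this is $\sum_i f(h_i,\star)$, and if $r$ is Series then $f(r,\star) = \max_{\gamma \in [\ell]} f(r,\gamma)$, since an empty external is vacuously monochromatic with respect to any color and the same formulas apply. The main obstacle will be writing down a careful correctness proof of the Series dichotomy, in particular verifying that whenever mode (A) is feasible it dominates every instance of mode (B) (because $f(h_{j^*},\gamma) \leq |M(h_{j^*})| \leq |M(h)|$), and that no coloring outside the dichotomy can produce any happy vertex in $M(h)$. Given correctness, the running time is polynomial: the cotree has $O(n)$ nodes, each transition can be evaluated in $O(\ell \cdot t)$ time where $t$ is the number of children (after precomputing $k_\gamma$ and the unique infeasible child per $\gamma$), and summing over the cotree gives an overall bound polynomial in $n$ and $\ell$.
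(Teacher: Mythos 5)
Your DP on the cotree is correct, but it is a genuinely different route from the paper's argument. The paper also starts from the modular decomposition, but it never sets up a recursion: it analyzes only the top \emph{series} node, doing a case distinction on how many precolored colors appear on each side of the join (zero, one, at least two), and then finishes greedily, using the observation that once the ``other side'' of the join is forced monochromatic in some color $j$, coloring \emph{every} remaining uncolored vertex with $j$ is already optimal (a vertex can only be happy with color $j$, and giving $j$ to an uncolored vertex never destroys anyone else's happiness). That greedy step is what lets the paper avoid your tables $f(h,\gamma)$, $g(h,\gamma)$ altogether, at the price of a rather terse case analysis. Your bottom-up formulation buys a cleaner, fully explicit algorithm whose correctness rests on one local lemma (the series dichotomy) rather than on a root-level enumeration, and it localizes exactly the structural fact both proofs exploit: a vertex of a module is happy only if its external neighborhood is monochromatic in its own color. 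One small repair to your justification of the dichotomy: two happy vertices in two distinct children of a series node is \emph{not} a contradiction when the external set is nonempty --- both are then forced to have color $\gamma$, and the situation collapses into your mode (A) with all children monochromatic $\gamma$; the correct statement is that if some child fails to be monochromatic $\gamma$, then all happy vertices lie in that child and all other children must be monochromatic $\gamma$. Since your recurrence already lists mode (A) separately and it dominates ($f(h_{j^*},\gamma)\leq|M(h_{j^*})|\leq|M(h)|$), the formulas $f(h,\gamma)=|M(h)|$, $f(h_{j^*},\gamma)$, or $0$ according to $k_\gamma=0,1,\geq 2$ stand as written; only the phrase ``a contradiction'' should be replaced by this case absorption when you write the full proof.
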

\begin{proof}

The modular decomposition tree of cographs has only parallel and series nodes. 
Let $G$ be a cograph whose modular decomposition tree is $M_G$. Without loss of generality we assume that 
the root $r$ of tree $M_G$ is a series node, otherwise $G$ is not connected and the number of happy vertices in $G$ is equals to the sum 
of the happy vertices in each connected component. 
Let the children of $r$ be $x$ and $y$. Further, let the cographs corresponding to the subtrees at 
$x$ and $y$ be $G_x$ and $G_y$.

We assume that $\ell \geq 3$, otherwise we use the polynomial time algorithm of $2$-MHV problem on general graphs to find the number of maximum happy vertices. We assume that both $G_x$ and $G_y$ contains at least two vertices. Suppose,
if $G_x$ has only vertex $v$, then $v$ is the universal (adjacent to all vertices) vertex in $G$. It is easy to see that in any optimal coloring $c$ of $G$, if a vertex $u$ is happy then $c(u)=c(v)$ i.e, in any optimal coloring all happy vertices are colored with color of $v$. We can guess the color of $v$ in $O(\ell)$ time. 

We have the following three cases based on the number of colors present in $G_x$ and $G_y$ in the partial coloring $c$ of $G$.
 

\textbf{Number of colors in $G_x$ is zero.} Let $\ell \geq 3$ be the number of colors used in $G_y$ by $c$. 
It is easy to see that no vertex of $G_x$ is happy as each vertex of $G_x$ is adjacent to at least two vertices of  different colors. Moreover all the vertices in $G_x$ need to be colored with a single color otherwise the number of happy vertices becomes zero. We try over all possible $O(\ell)$ many ways of coloring $G_x$. In the end color all uncolored vertices in $G_y$ with the color used in $G_x$.

\textbf{Number of colors in $G_x$ is one.} 
Since $\ell \geq 3$, the number of colors (distinct from the color used in $G_x$) used in $G_y$ by $c$ is at least two. The optimal coloring is to color all the uncolored vertices of $G$ with color used in $G_x$.  

\textbf {Number of colors in $G_x$ is at least two.}
We assume that the number of colors in $G_y$ is at least two, otherwise we can use one of the above two cases by interchanging $G_x$ and $G_y$.
In this case no vertex in $G$ is happy, because every vertex of $G$ is adjacent to at least two precolored vertices having different colors. 
\end{proof}

\section{Conclusion}
In this paper, we study the \mhv{} and \mhe{} problems from the parameterized perspective. We showed that
\begin{itemize}
 \item Both the problems are \FPT{} with respect to the strutural parameters (a) Vertex cover (b) Distance to clique
 \item \mhe{} is \FPT{} when parameterized by number of happy edges in solution (standard parameter) and \mhv{} is \FPT{} when parameterized by number of happy vertices in the solution and the number of colors.
 \item Both \mhv{} and \mhe{} are $\NP$-hard on split graphs and bipartite graphs and \mhv{} is polynomially solvable on cographs 
\end{itemize}
The following are some interesting open problems.

\begin{itemize}
 \item Are \mhv{} and \mhe{} are \FPT{} when parameterized by the cluster vertex deletion number.
 \item Does \mhv{} and \mhe{} problems admit polynomial kernels when parameterized by (a) Vertex cover (b) Distance to clique.
\end{itemize}

\bibliographystyle{splncs03}
\bibliography{happyrefs.bib}

\end{document}